\setlist{nosep}
\newtheoremstyle{customdefi}{\topsep}{0.3\topsep}{}{}{\bfseries}{.}{.5em}{}
\theoremstyle{plain}
\newtheorem{thm}{Theorem}
\newtheorem{cor}{Corollary}
\theoremstyle{customdefi}
\begin{document}
\author{Christoffer Hindlycke}
\email{christoffer.hindlycke@liu.se}
\author{Jan-{\AA}ke Larsson}
\email{jan-ake.larsson@liu.se}
\affiliation{
Department of Electrical Engineering,
Link\"oping University 
581 83 Link\"oping, SWEDEN%
}
\date{\today}
\title{Single-qubit rotation algorithm with logarithmic Toffoli count and gate depth}

\begin{abstract}
Building generic gates from a restricted gate set is a difficult but important problem, especially in the noisy regime where only a limited set of noise-resistant gates are available, e.g., fault tolerant Clifford gates (generated by Hadamard, Phase, and CNOT gates) and fault tolerant Toffoli gates.
The Toffoli is also often used as a building block of many algorithms and will need to be constructed if not directly available.
This makes Clifford+Toffoli an attractive gate set for building generic gates.
In this paper we give a simple and efficient algorithm for building an approximate single-qubit rotation using only Clifford+Toffoli, which in turn enables any generic single-qubit unitary.
An important difference compared to earlier attempts is that the use of the Toffoli allows us to use simple rounding as opposed to a complicated approximation algorithm.
The resulting gate array does not rely on repeatedly applying a fixed rotation, but immediately applies a rotation $R_{\theta^\ast}$ that is $\epsilon$-close to the desired rotation $R_\theta$, with success probability strictly greater than $1/2$.
It can be rerun upon failure, giving an expected number of repetitions strictly less than 2, an expected Toffoli count strictly less than $4\lceil\log\tfrac{1}{\epsilon}\rceil$, an expected gate depth strictly less than $4\lceil\log\tfrac{1}{\epsilon}\rceil+6$, and uses $2\lceil\log\tfrac1\epsilon\rceil$ ancillas.
The small circuit depth of our construction enables low-noise gates on existing quantum computational devices, and allows for arbitrary precision using a very modest number of ancillas.

\end{abstract}
\maketitle

\section{Introduction}

The Toffoli gate is a fundamental tool in quantum computing and is crucial in Shor's algorithm \citep{Shor1994,Shor1997}, quantum error correction \citep{Shor1995,Cory1998, Knill2001, Chiaverini2004, Reed2012, Nigg2014}, and shallow quantum circuits \citep{Bravyi2018}. 
Many of these quantum algorithms also need arbitrarily small single-qubit rotations.
In general, approximation of generic gates using a finite gate set is essential.
This is especially true in the Fault Tolerant Quantum Computing era where direct fault-tolerant constructions are typically available only for a limited number of gates, so that short generation of generic gates is an absolute necessity for high performance computing \citep{Preskill1998}.
The Solovay-Kitaev theorem \citep{Kitaev1997,Dawson2005} tells us that with access to a limited but well-chosen set of single-qubit quantum gates, an approximation to any desired single-qubit gate to within distance $\epsilon$ can be constructed, with gate depth close to qubic in the logarithm of $\tfrac1\epsilon$; this has recently been improved to the power 1.44 \cite{Kuperberg2023}. 

For specific gate sets, the most common gate set to study is that of Clifford+$T$.
The algorithm of Kliuchnikov \textit{et al.}\ \cite{Kliuchnikov2013} achieves complexity logarithmic in $\frac1\epsilon$, which is optimal up to a multiplicative constant; subsequent efforts have been focused on identifying and reducing this constant \citep{Paetznick2014,Kliuchnikov2016,Ross2016,Selinger2014,Bocharov2015,Gheorghiu2022}.

The Clifford+Toffoli gate set will be available in future quantum computer hardware, since the Toffoli gate is indispensable for quantum computing, and a large amount of work has been put towards its practical implementation \citep{Lanyon2009,Monz2009,DiCarlo2009,Fedorov2012,Levine2019}.
It is therefore natural to ask if generic unitaries can be built using that gate set; some recent results \citep{Amy2023,Mukhopadhyay2024} give complexity $O(\log\frac1\epsilon)$.
An equivalent gate set is Clifford+Controlled-Phase where the conversion cost from one to the other is a factor 3 \cite{Bian2023,Mukhopadhyay2024}, so that e.g., the algorithm from Ref.~\cite{Glaudell2019} gives an equivalent Toffoli count of $24\log\frac1\epsilon$.

Here we give an explicit construction for approximating any $z$-axis rotation using the Clifford+Toffoli gate set with expected Toffoli count $4\lceil\log\frac1\epsilon\rceil$.
Three such rotations allows approximation of any single qubit gate.
The construction is very simple and efficient, as it requires only a trigonometric expression of the desired rotation angle, rounded to the desired accuracy.
The resulting algorithm has expected gate depth $4\lceil\log\frac1\epsilon\rceil+6$ and uses $2\lceil\log\frac1\epsilon\rceil$ ancillas; the low gate depth and algorithm simplicity will make constructions of this kind ubiquitous in quantum computing.

\begin{figure}[hb]
	\centering
	a) \includegraphics[scale=1.1]{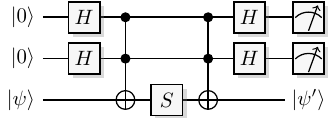}\bigskip
	
	b) \includegraphics[scale=1]{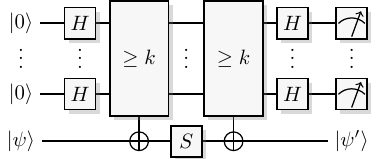}\smallskip
	
	\caption{Circuits for approximate qubit rotation. a) Circuit that applies $R_\varphi$ to $\ket{\psi}$ ($\cos \varphi = 3/5$, $\sin\varphi=4/5$) with probability $5/8$; otherwise a $Z$ gate. b) Circuit that applies $R_{\theta^\ast}$ to $\ket{\psi}$, close to a desired rotation $R_\theta$ with high probability, where ancilla count $n$ and comparison $k$ should be chosen according to Theorem \ref{thm:distance}.}
	\label{fig:qsubroutine}
\end{figure}

Our starting point is an algorithm, conceptually simpler than Solovay-Kitaev, but less efficient: repeatedly applying the fixed rotation generated by a circuit by Nielsen and Chuang \cite[p.~198]{Nielsen2010} that contains only gates from Clifford+Toffoli.
This fixed $\varphi$-rotation around the $z$ axis has $\cos \varphi = 3/5$, $\sin \varphi = 4/5$ and is applied with probability 5/8; see Figure \ref{fig:qsubroutine}a. 
As $\varphi$ is an irrational multiple of $\pi$, repeated application allows for approximation of any rotation on one qubit around the $z$ axis.
This algorithm has constant gate depth for each repetition but requires $O(\frac1\epsilon)$ repetitions on average. 

In this work, we modify this algorithm to be efficient, avoiding a high repetition count.
We introduce additional ancillary controls and replace the two Toffoli gates with $\geq k$ tests in the circuit, giving control over the actual rotation angle.
This gate array either applies the desired rotation to within distance~$\epsilon$ with probability strictly greater than 1/2 or applies a $Z$ gate; the measurement carried out on the controls indicate if a $Z$ gate was applied, and then we correct the input state and try again.
The resulting algorithm has non-constant gate depth $O(\log\frac1\epsilon)$ for each repetition, but instead requires strictly less than two (a constant number of) repetitions on average.
We essentially replace a linear number $O(\frac1\epsilon)$ of repetitions with a logarithmic number $O(\log\frac1\epsilon)$ of ancillas and interconnecting Toffoli gates.

\section{Approximating an arbitrary rotation}
The circuit in Figure \ref{fig:qsubroutine}a works by marking a certain number of superposed states, by adding the relative phase factor $i = e^{i\frac{\pi}{2}}$.
Increasing the number of controls increases the number of superposed states, giving more fine-grained control over the exact ratio of marked/unmarked states and therefore more control over the resulting rotation angle.
Then, the two Toffoli gates can be replaced with an $n$-qubit $\ge k$ test, as depicted in Figure \ref{fig:qsubroutine}b.
A $\ge k$ test on $n$ qubits is possible for $0\le k\le 2^n$: If the control binary digit is greater than or equal to a given limit $k$, the target is inverted.
We focus on the quantum state after the second $\ge k$ test; of the computational basis states contained in $\ket{+^n,0}$, the circuit leaves $k$ states unmarked and marks $2^n-k$ states with an extra phase factor $i=e^{i\frac\pi2}$.
With 
\begin{equation}
\begin{split}
C&=|k+i(2^n-k)|\\
0\le v&=\arg(k+i(2^n-k))\le \tfrac\pi2,
\end{split}
\label{eq:v}
\end{equation}
the scalar product with the $\ket{+^n,0}$ state is
\begin{equation}
\frac{k+i(2^n-k)}{2^n}
=\frac{C\cos v+iC\sin v}{C(\cos v+\sin v)}
=\frac{e^{iv}}{\cos v+\sin v}.
\end{equation}
For the $\ket{+^n,1}$ state, the circuit marks $k$ states of the computational basis and leaves $2^n-k$ states unmarked. 
The scalar product with the $\ket{+^n,1}$ state is
\begin{equation}
\hspace{-1mm}\frac{2^n-k+ik}{2^n}
		=\frac{e^{i\left(\frac\pi2-v\right)}}{\cos v+\sin v}
		=\frac{e^{iv}}{\cos v+\sin v}e^{i\left(\frac\pi2-2v\right)}.
\end{equation}
Therefore, if the measurement outcome after the final Hadamards is $n$ zeros, we have performed an $R_{\theta^\ast}$ rotation, where $\theta^\ast=\frac\pi2-2v$, so that $-\tfrac\pi2\le\theta^\ast\le\tfrac\pi2$ and
\begin{equation}
		k=C\cos v=2^n\frac{\cos v}{\cos v+\sin v}
		=2^n\frac {1+\tan\tfrac{\theta^\ast}2}2.
  \label{eq:b}
\end{equation}
We can now perform a rotation $R_{\theta^\ast}$ arbitrarily close to an ideal rotation $R_\theta$ for $-\tfrac\pi2\le\theta\le\tfrac\pi2$, by adjusting the number of controlling ancillas $n$ and rounding $2^{n-1}\tan\frac\theta2$ to the closest integer.
Calculating $\tan\tfrac\theta2$ can be done in polynomial time on a classical computer.
We have the following theorem.\smallskip
\begin{thm}
	\label{thm:distance}
	Given a desired rotation angle $-\tfrac\pi2\le\theta\le\tfrac\pi2$ and accuracy $\epsilon>0$, choose ancilla count 
	\begin{equation}
  n=1+\Bigl\lceil\log\tfrac1\epsilon\Bigr\rceil
    \label{eq:n}
	\end{equation}
	and comparison
  	\begin{equation}
  k=2^{n-1}+\Bigl\lfloor2^{n-1}\tan\tfrac{\theta}{2}+\tfrac12\Bigr\rfloor,
\end{equation}
  this corresponds to choosing $\theta^\ast$ so that
  \begin{equation}
  	\label{eq:theta-ast}
  	\begin{split}
  		 2^{n-1}\tan\tfrac{\theta^\ast}{2}
      =\Bigl\lfloor2^{n-1}\tan\tfrac{\theta}{2}+\tfrac12\Bigr\rfloor.
  	\end{split}
  \end{equation}
  Using $\ket{0}^{\otimes n}
  \ket{\psi}$ as input to the circuit in Figure  \ref{fig:qsubroutine}b, if the ancilla measurement outcomes are all zero, the circuit applied an $R_{\theta^\ast}$ rotation to $\ket{\psi}$, with an extra global phase, such that $||R_\theta- R_{\theta^\ast}|| \leq \lvert \theta - \theta^\ast \rvert \leq \epsilon$. 
  This happens with probability $\tfrac12 + \tfrac12\tan^2 (\frac{\theta^\ast}{2})\ge\tfrac12$.
\end{thm}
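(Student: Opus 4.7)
The plan is to verify the four claims of the theorem in order, building on the amplitude formulas (2)--(4) already derived in the text above. First, substituting the prescribed $k$ into eq.~(4) immediately yields the defining identity (6) of $\theta^*$; since $\theta\in[-\pi/2,\pi/2]$ forces $\tan(\theta/2)\in[-1,1]$, the floor lies in $[-2^{n-1},2^{n-1}]$, so $k\in[0,2^n]$ is admissible and $\theta^*\in[-\pi/2,\pi/2]$.

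Next I would bound the rotation error by a rounding-plus-contraction argument. Nearest-integer rounding gives $|2^{n-1}\tan(\theta/2)-2^{n-1}\tan(\theta^*/2)|\le 1/2$; since $\frac{d}{dx}\tan x=\sec^2 x\geq 1$ on $(-\pi/2,\pi/2)$, the tangent is expanding, so $|\theta/2-\theta^*/2|\leq 2^{-n}$, i.e.\ $|\theta-\theta^*|\le 2^{1-n}\le\epsilon$ by the choice of $n$ in eq.~(5). The operator-norm bound $||R_\theta-R_{\theta^*}||\leq|\theta-\theta^*|$ then follows from $R_\theta-R_{\theta^*}=R_{\theta^*}(R_{\theta-\theta^*}-I)$ together with the standard eigenvalue estimate $|e^{i\alpha}-1|\leq|\alpha|$ applied to the eigenvalues of $R_{\theta-\theta^*}-I$.

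For the success probability I would argue that the circuit acts diagonally in the computational basis of the target qubit: the outer CNOT of each $\geq k$-test flips the target conditionally on the ancilla control, the interior phase gate is itself diagonal, and the second CNOT unflips the target, so the net action transfers no amplitude between $\ket{0}$ and $\ket{1}$ on the target. Hence the two diagonal overlaps (2) and (3) fully describe the conditional action on the target, and their ratio $e^{i(\pi/2-2v)}=e^{i\theta^*}$ matches $R_{\theta^*}$ up to the common scalar $e^{iv}/(\cos v+\sin v)$. The probability of the all-zero ancilla outcome is therefore $1/(\cos v+\sin v)^2 = 1/(1+\sin 2v) = 1/(1+\cos\theta^*) = \tfrac12\sec^2(\theta^*/2) = \tfrac12+\tfrac12\tan^2(\theta^*/2)\geq 1/2$ by the half-angle identity.

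The main subtlety I anticipate is making the diagonality observation fully rigorous: the excerpt only computes the two $\ket{+^n,j}$-diagonal overlaps, and one must independently verify that the off-diagonal amplitude $\langle +^n,0|U|+^n,1\rangle$ (and its conjugate) vanishes in order to conclude that the conditional action on $\ket{\psi}=\alpha\ket{0}+\beta\ket{1}$ is literally $R_{\theta^*}$, up to a global phase, rather than some other operator sharing the same diagonal. Once this structural fact about the circuit is made explicit, the remaining steps reduce to routine algebra.
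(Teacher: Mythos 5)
Your proposal is correct and follows essentially the same route as the paper: bound $|\theta-\theta^\ast|$ by the rounding error $2^{-n}$ in $\tan\tfrac\theta2$ (you via the mean-value theorem and $\sec^2x\ge1$, the paper via the tangent subtraction formula with $\tan\tfrac\theta2\tan\tfrac{\theta^\ast}2\ge0$ --- the same underlying inequality), then $\lVert R_\theta-R_{\theta^\ast}\rVert=|e^{i\theta}-e^{i\theta^\ast}|\le|\theta-\theta^\ast|$, and the success probability $1/(\cos v+\sin v)^2=\tfrac12+\tfrac12\tan^2\tfrac{\theta^\ast}2$. The ``subtlety'' you flag is handled correctly by your own observation: for each ancilla basis state the target sees either $S$ or $XSX=iS^\dagger$, both diagonal, so the off-diagonal amplitudes vanish and the diagonal overlaps (2)--(3) determine the conditional action --- a detail the paper leaves implicit in ``by the preceding equations.''
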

\begin{proof}
	By the preceding equations, when the measurement outcomes are all zero the circuit applied the rotation  $R_{\theta^\ast}$ with an extra global phase $v=\tfrac\pi4-\tfrac{\theta^*}2$, and this happens for all $\ket\psi$ with probability         
\begin{equation}
	P(0^n)=\frac1{(\cos v+\sin v)^2}
  =\frac1{1+\cos\theta^\ast}
	=\frac{1 + \tan^2 \frac{\theta^\ast}{2}}{2}.
\end{equation}
	It remains to bound the distance between the actual rotation $R_{\theta^\ast}$ being carried out and the desired rotation $R_\theta$.
	We have
	\begin{equation}
		\begin{split}
			||&R_\theta-R_{\theta^*}||
			=\smashoperator{\sup_{\braket{\varphi \vert \varphi} = 1}}\bigl\lVert (R_\theta-R_{\theta^\ast})\ket\varphi
			\bigr\rVert_2
      =\bigl|e^{i\theta} - e^{i \theta^*}\bigr|\\
			&=\bigl|e^{i\frac{\theta+\theta^*}2}
			\bigl(e^{i\frac{\theta-\theta^*}2}-e^{-i\frac{\theta-\theta^*}2}\bigr)\bigr|
			=2\bigl|\sin\tfrac{\theta-\theta^\ast}2\bigr|\\
      &\le|\theta-\theta^*|
      \le2\bigl|\tan\tfrac{\theta-\theta^*}2\bigr|
      =2\biggl|\frac{\tan\tfrac\theta2 - \tan\tfrac{\theta^\ast}{2}}
      {1+\tan\tfrac\theta2\tan\tfrac{\theta^\ast}{2}}\biggr|\\
      &\le2\bigl|\tan\tfrac\theta2-\tan\tfrac{\theta^*}2\bigr|
      \le 2^{-n+1}\le
       2^{-\log\frac1\epsilon}=
       \epsilon,
		\end{split}
	\end{equation}
because $\tan\tfrac\theta2\tan\tfrac{\theta^\ast}2\ge0$.
\end{proof}

In Theorem~\ref{thm:distance} we use the smallest possible choice for~$n$ such that the error is less than $\epsilon$.
We note that for $\theta^\ast \neq 0$, the probability of success is strictly greater than $1/2$.
We should also point out that if $\theta^\ast=0$ measurement outcomes all zero will result in an $R_0=\mathbb I$ rotation having been applied by our circuit, which then must be close enough to the desired $R_\theta$ rotation.
This case is better handled by applying the identity operator to $\ket{\psi}$, the success probability is then~1.
Similarly, $\theta^\ast=\pm\tfrac\pi2$ correspond to applying the $S=R_{\pi/2}$ or the $S^\dag=R_{-\pi/2}$ rotation, but also gives the trivial comparisons $k=2^n$ or $k=0$; the $S$ or $S^\dag$ gate is then applied with probability 1, as stated in the theorem.\smallskip
\begin{cor}
	The approximation in Theorem \ref{thm:distance} can be made exact iff there exists an $n$ such that $2^n \tan \tfrac{\theta}{2}$ is an integer.
\end{cor}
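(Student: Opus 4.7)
The plan is to reduce ``exact approximation'' to the claim that the floor-rounding step inside Theorem~\ref{thm:distance} is a no-op, and then to observe that the factor-of-two discrepancy between the $2^{n-1}$ appearing in the theorem and the $2^n$ appearing in the corollary is immaterial.

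First I would note that in the range $-\tfrac\pi2\le\theta,\theta^\ast\le\tfrac\pi2$ fixed by the theorem, ``exact'' means literally $\theta^\ast=\theta$, and that by \eqref{eq:theta-ast} this holds iff
$\bigl\lfloor 2^{n-1}\tan\tfrac{\theta}{2}+\tfrac{1}{2}\bigr\rfloor = 2^{n-1}\tan\tfrac{\theta}{2}$,
i.e., iff $2^{n-1}\tan\tfrac{\theta}{2}\in\mathbb{Z}$. This is the single content-bearing observation.

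For the only-if direction I would take any $n$ for which the construction of Theorem~\ref{thm:distance} yields $\theta^\ast=\theta$; the observation above gives $2^{n-1}\tan\tfrac\theta2\in\mathbb{Z}$, so setting $m=n-1$ furnishes the integer $2^m\tan\tfrac\theta2$ demanded by the corollary. For the if direction I would start from $m$ with $2^m\tan\tfrac\theta2\in\mathbb{Z}$ and pick any $n\ge m+1$; then $2^{n-1}\tan\tfrac\theta2=2^{n-1-m}\bigl(2^m\tan\tfrac\theta2\bigr)$ is still an integer, so the rounding in the definition of $k$ leaves it unchanged and \eqref{eq:theta-ast} delivers $\theta^\ast=\theta$ exactly. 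Choosing $\epsilon$ small enough that Eq.~\eqref{eq:n} selects such an $n$ (equivalently $\epsilon\le 2^{-m}$) produces an exact construction.

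I do not expect any genuine obstacle: the corollary is a transparent arithmetic reformulation of when ``round to the nearest integer'' acts as the identity, and the only bookkeeping is the harmless shift between the exponents $n-1$ and $n$ in the two statements.
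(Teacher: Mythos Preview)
Your proposal is correct and follows essentially the same approach as the paper: both reduce exactness to the observation that, via \eqref{eq:theta-ast}, $\theta^\ast=\theta$ iff $2^{n-1}\tan\tfrac\theta2$ is already an integer. The paper's proof is a single sentence (``Theorem~\ref{thm:distance} gives $\theta^\ast=\theta$ iff using such an $n$''); you spell out the same idea more carefully, including the harmless index shift between the ancilla count $n$ and the exponent in $2^n\tan\tfrac\theta2$, which the paper leaves implicit.
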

\begin{proof}
	Theorem \ref{thm:distance} gives $\theta^* = \theta$ iff using such an~$n$.
\end{proof}

\begin{cor}
	The set of $\theta$ for which Theorem~\ref{thm:distance} gives an exact $R_\theta$ is dense in $-\tfrac\pi2\le\theta\le\tfrac\pi2$.
\end{cor}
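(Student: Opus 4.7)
The plan is to reduce the corollary to the density of dyadic rationals and transfer that density through the half-angle tangent. First I would invoke the previous corollary to reformulate the claim: the set of ``exact'' angles is exactly
\begin{equation*}
  E=\bigl\{\theta\in[-\tfrac\pi2,\tfrac\pi2]:\exists n\in\mathbb N,\ 2^n\tan\tfrac\theta2\in\mathbb Z\bigr\},
\end{equation*}
so $E$ is the preimage under $\theta\mapsto\tan\tfrac\theta2$ of the dyadic rationals $D=\{m/2^n:m\in\mathbb Z,n\in\mathbb N\}$.

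Next I would note two standard facts. The first is that $D$ is dense in $\mathbb R$, hence in $[-1,1]$: given any target $t\in[-1,1]$ and tolerance $\delta>0$, choosing $n$ with $2^{-n}<\delta$ and $m=\lfloor 2^n t+\tfrac12\rfloor$ yields a dyadic rational within $\delta$ of $t$. The second is that $\theta\mapsto\tan\tfrac\theta2$ is a homeomorphism from $[-\tfrac\pi2,\tfrac\pi2]$ onto $[-1,1]$, being continuous, strictly increasing, with continuous inverse $t\mapsto 2\arctan t$ on that closed interval.

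Combining the two, I would finish as follows: given any $\theta_0\in[-\tfrac\pi2,\tfrac\pi2]$ and $\delta>0$, let $t_0=\tan\tfrac{\theta_0}2\in[-1,1]$; by continuity of $2\arctan$ at $t_0$, there is $\delta'>0$ so that $|t-t_0|<\delta'$ implies $|2\arctan t-\theta_0|<\delta$. Picking a dyadic rational $t\in D\cap(t_0-\delta',t_0+\delta')$ (which exists by density of $D$), the angle $\theta=2\arctan t$ lies in $E$ and is within $\delta$ of $\theta_0$. Hence $E$ is dense.

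There is essentially no hard step here: the whole argument is a transport of density under a homeomorphism. The only minor care needed is at the endpoints $\theta=\pm\tfrac\pi2$, which correspond to $t=\pm1\in D$, so these angles are themselves in $E$ and cause no issue for closed-interval density.
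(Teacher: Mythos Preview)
Your proof is correct and follows essentially the same approach as the paper: both reduce to the density of dyadic rationals and pull back through $2\arctan$. The paper's version is simply terser, writing down the explicit approximating sequence $\theta_m=2\arctan\bigl(2^{-m}\lfloor 2^m\tan\tfrac\theta2+\tfrac12\rfloor\bigr)$ and observing $\theta_m\to\theta$, rather than framing the transport of density via a homeomorphism.
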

\begin{proof}
  Let $\theta_m=2\arctan( 2^{-m}\lfloor2^m\tan\tfrac\theta2+\tfrac12\rfloor)$. 
  Theorem~\ref{thm:distance} gives $\theta_m^*=\theta_m$ for $n\ge m+1$, and $\smashoperator{\lim\limits_{m\to\infty}}\theta_m=\theta$\rule{0pt}{1em}.
\end{proof}

If the measurement outcomes are not all zero, the following holds.
\begin{thm}
	\label{thm:zgate}
	Let $n,k,\theta^*$ be as in Theorem \ref{thm:distance}. 
  Using $\ket{0}^{\otimes n} \ket{\psi}$ as input to the circuit in Figure  \ref{fig:qsubroutine}b, if the ancilla measurement outcomes are not all zero, the circuit applied a $Z$ gate to $\ket{\psi}$, with an extra global phase. 
  This happens with probability $\tfrac12 - \tfrac12\tan^2 (\frac{\theta^\ast}{2})\le\tfrac12$.
\end{thm}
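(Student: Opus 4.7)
My plan is to repeat the amplitude calculation behind Theorem~\ref{thm:distance}, this time tracking \emph{every} outcome of the final $H^{\otimes n}$ measurement rather than only the all-zero one. Writing $\ket\psi=\alpha\ket0+\beta\ket1$, the preceding derivation already shows that the joint state immediately after the second $\geq k$ test is
\begin{equation*}
\tfrac{1}{\sqrt{2^n}}\sum_{y=0}^{2^n-1}\bigl[\alpha\,c_y^0\ket{y,0}+\beta\,c_y^1\ket{y,1}\bigr],
\end{equation*}
where $(c_y^0,c_y^1)=(1,i)$ for $y<k$ and $(c_y^0,c_y^1)=(i,1)$ for $y\ge k$, as spelled out around Eq.~\eqref{eq:v}. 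Applying the final Hadamards on the ancilla and reading off the amplitude of an arbitrary outcome $x\in\{0,1\}^n$ yields the Walsh--Hadamard sum $\tfrac{1}{2^n}\sum_y(-1)^{x\cdot y}\bigl(\alpha\,c_y^0\ket 0+\beta\,c_y^1\ket 1\bigr)$.

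The crucial step is to specialize to $x\neq 0^n$. Then $\sum_y(-1)^{x\cdot y}=0$, so with $A_x:=\sum_{y<k}(-1)^{x\cdot y}$ one has $\sum_{y\ge k}(-1)^{x\cdot y}=-A_x$, and the $\ket0$ and $\ket1$ amplitudes collapse to $\alpha A_x(1-i)$ and $\beta A_x(i-1)=-\beta A_x(1-i)$, respectively. The amplitude vector for every non-zero $x$ is therefore
\begin{equation*}
\tfrac{A_x(1-i)}{2^n}\bigl(\alpha\ket 0-\beta\ket 1\bigr)=\tfrac{A_x(1-i)}{2^n}\,Z\ket\psi,
\end{equation*}
a scalar (depending on $x$) times the \emph{same} data-qubit vector $Z\ket\psi$. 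Consequently, the joint post-circuit state restricted to non-zero ancilla outcomes factorises cleanly as $\bigl(\sum_{x\neq 0^n}\tfrac{A_x(1-i)}{2^n}\ket{x}\bigr)\otimes Z\ket\psi$, so any measurement outcome $x\neq 0^n$ leaves the data qubit in $Z\ket\psi$ up to a global phase, as claimed.

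For the probability I would simply invoke complementarity: Theorem~\ref{thm:distance} already established $P(0^n)=\tfrac12+\tfrac12\tan^2\tfrac{\theta^\ast}{2}$, so a non-zero outcome occurs with probability $1-P(0^n)=\tfrac12-\tfrac12\tan^2\tfrac{\theta^\ast}{2}\le\tfrac12$. I do not expect a substantial obstacle here: the only real piece of work is the Walsh--Hadamard manipulation that isolates the common factor $1-i$ shared by all $x\neq 0^n$, and once that factor is extracted, the identification of the applied gate as $Z$ and the complementary probability both fall out immediately.
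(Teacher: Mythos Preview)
Your proposal is correct and follows essentially the same route as the paper: both compute the Hadamard-basis amplitude for an arbitrary $x\neq 0^n$, use the orthogonality relation $\sum_{y}(-1)^{x\cdot y}=0$ to reduce the two partial sums to $\pm A_x$, extract the common factor $(1-i)$, and read off the relative sign as a $Z$ gate, with the probability obtained by complementarity from Theorem~\ref{thm:distance}. The only cosmetic difference is that you work with a general $\ket\psi=\alpha\ket0+\beta\ket1$ at once, whereas the paper treats the basis states $\ket0$ and $\ket1$ separately and compares the two overlaps.
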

\begin{proof}
	The probability that the measurement outcomes are not all zero follows from Theorem~\ref{thm:distance}. 
 	After the second $\geq k$ test, the measurement eigenstate for outcome $x$ is
	\begin{equation}
    \ket{\Theta_{n,x}} = H^{\otimes n}\ket x
    =2^{-\frac n2}\sum_{y=1}^{2^n-1}(-1)^{x\cdot y}\ket y.
	\end{equation}
  Also, with $\ket\psi=\ket0$, at that point the quantum state is
	\begin{equation}
    		\ket{\Psi_{n,k}} = 
     		2^{-\frac n2} \biggl(\sum_{j = 0}^{k-1} \ket{j} 
  			+ \sum_{j=k}^{2^n-1} i\ket{j}\biggr),
 	\end{equation}
  and with $\ket\psi=\ket1$ the quantum state is
  \begin{equation}
  	\ket{\Phi_{n,k}} = 
    			2^{-\frac n2} \biggl(\sum_{j = 0}^{k-1} i\ket{j} 
    			+ \sum_{j=k}^{2^n-1} \ket{j}\biggr).
  \end{equation}
  If $x\neq 0$ we have
  \begin{equation}
  \begin{split}
    &\langle\Theta_{n,x}|\Psi_{n,k}\rangle
    =2^{-n}\biggl(\sum_{j=0}^{k-1}(-1)^{x\cdot j}+i\sum_{j=k}^{2^n-1}(-1)^{x\cdot j}\biggr)\\
    &\qquad=2^{-n}\biggl((1-i)\sum_{j=0}^{k-1}(-1)^{x\cdot j}+i\sum_{j=0}^{2^n-1}(-1)^{x\cdot j}\biggr)\\
    &\qquad=2^{-n}(1-i)\sum_{j=0}^{k-1}(-1)^{x\cdot j}
  \end{split}
  \label{eq:Z}
  \end{equation}
  and
  \begin{equation}
  \begin{split}
    &\langle\Theta_{n,x}|\Phi_{n,k}\rangle
    =2^{-n}\biggl(i\sum_{j=0}^{k-1}(-1)^{x\cdot j}+\sum_{j=k}^{2^n-1}(-1)^{x\cdot j}\biggr)\\
    &\quad=2^{-n}(i-1)\sum_{j=0}^{k-1}(-1)^{x\cdot j}
    =-\langle\Theta_{n,x}|\Psi_{n,k}\rangle
  \end{split}
  \end{equation}
  so the circuit applied a $Z$ gate on $\ket\psi$ with the global phase $\arg(1-i)=-\tfrac\pi4$ or $\arg(i-1)=\tfrac{3\pi}4$ depending on the sign of the remaining sum in Eqn.~(\ref{eq:Z}).
\end{proof}

\section{Explicit gate array and gate counts}
A straightforward construction of the $\ge k$ test is to encode $k$ in an additional ancillary $n$-qubit register and use a ripple-carry comparator \citep{Cuccaro2004}.
But since $k$ is a classical constant, we can also use it to build the gate array. 
A ripple-carry bitwise comparison with a classical constant $k$ is shown in Figure~\ref{fig:comp}a-c. 
Note that the circuit in Figure \ref{fig:qsubroutine}a is a special case of our circuit where $n = 2$ and $k = 3$.
Also, the construction is somewhat similar to a textbook $n$-control Toffoli construction, in fact, using this design for a $\ge k=2^n-1$ comparator gives exactly the construction in Ref.~\cite{Nielsen2010} Figure~4.10.

This greater-than ($>k$) comparator can be turned into a greater-or-equal ($\ge k$) comparator by setting carry-in to 1, such a constant input can also be used to simplify the gate array.
If $k_j=0$ and $\ket c$ is known to be $\ket1$ (see Figure~\ref{fig:comp}d), neither carry-in  nor the input $\ket x$ are used so can be omitted from the circuit, and carry-out is $\ket1$ so further reductions will occur.
If $k_j=1$ and $\ket c$ is known to be $\ket1$ (see Figure~\ref{fig:comp}e), the comparison outputs $\ket x$ itself so the carry-in can be omitted; if the bit comparison target is an ancillary system also the CNOT and target ancilla can be omitted, only $\ket x$ needs to be retained. 

With these simplifications the lowest significant bit comparison uses no gates and no ancilla internal to the comparator.
Also, the highest significant comparison uses a single Toffoli and no ancilla internal to the comparator, and the others use two Toffolis and one ancilla internal to the comparator.
Thus, the comparator uses $2n-3$ Toffolis, and $n-2$ internal ancillas.

We use two comparators in the circuit which would give total Toffoli count $4n-6$ but this can be reduced further by omitting the uncomputation of the intermediate values in the internal ancillas and reusing the same ancillas in the second comparator, which then only needs to perform the uncomputation step.
After this reduction, the Toffoli count is $2n-2=2\lceil\log\frac1\epsilon\rceil$.
Since the comparator-internal ancilla count is $n-2$, the total ancilla count of the gate array is also $2n-2=2\lceil\log\frac1\epsilon\rceil$.

If the lowest significant bits of $k$ are 0, the Toffoli count reduces further because of the constant output of the bit comparator of Figure~\ref{fig:comp}d, in which case the lowest significant bits of $x$ are also not needed.
When this happens one can reduce the ancilla count $n$ but retain the same accuracy $\epsilon$, we give an example below that may clarify this. 
In any case, an upper bound to the Toffoli count is $2n-2=2\lceil\log\frac1\epsilon\rceil$.

\begin{figure}
	\includegraphics[width=\linewidth]{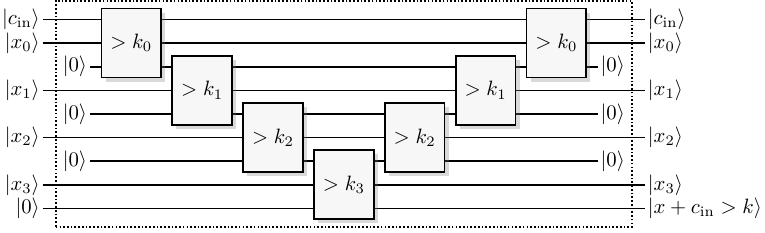}
  a) Comparator circuit from bitwise ripple-carry comparisons\hfill\medskip\\
  \includegraphics{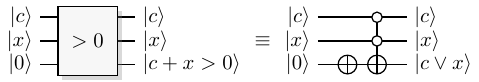}\\
  b) Bit comparison with $k_j=0$, white controls are inverted\hfill\medskip\\
  \includegraphics{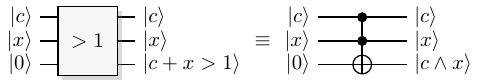}\\
  c) Bit comparison with $k_j=1$\hfill\medskip\\
  \includegraphics{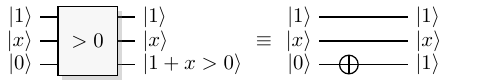}\\
  d) Bit comparison with $k_j=0$ and carry-in $\ket c=\ket1$\hfill\medskip\\
  \includegraphics{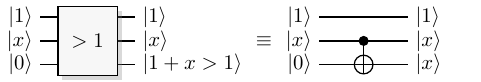}\\
  e) Bit comparison with $k_j=1$ and carry-in $\ket c=\ket1$\hfill\medskip\null
	\caption{Bitwise ripple-carry comparator made of a sequence of Toffoli gates. For indices where $k_i=0$, the modified Toffoli realizes a quantum OR gate. If carry-in is constant 1 the bit comparisons simplify.}
	\label{fig:comp}
\end{figure}

\section{Repeat until success}
The circuit of Figure \ref{fig:qsubroutine}b performs a probabilistic approximation, so can fail. 
The following algorithm, which has random runtime, instead always succeeds and has several other good properties.
\begin{algorithm}[H]
	\caption{Rotation algorithm}\label{alg:rotation}
	\begin{algorithmic}[1]
		\Statex \hspace{-5mm}\textbf{Input} $\theta, \epsilon,\ket{\psi}$
		\Statex \hspace{-5mm}\textbf{Output} $R_{\theta^\ast} \ket{\psi}$ where $||R_\theta- R_{\theta^\ast}|| \leq \epsilon$
		\item\textbf{choose} $n, k, \theta^*$ as in Theorem \ref{thm:distance}
		\item\textbf{if} $\theta^* \neq 0$ \textbf{repeat}
		\item\quad\textbf{prepare} $n$ ancillary qubits $\ket{\phi} = \ket{0}^{\otimes n}$
		\item\quad\textbf{apply} $H^{\otimes n}$ to $\ket{\phi}$
		\item\quad\textbf{apply} a $\geq k$ test on $\ket{\phi}$ with target $\ket{\psi}$
		\item\quad\textbf{apply} $S$ to $\ket{\psi}$
  	\item\quad\textbf{apply} a $\geq k$ test on $\ket{\phi}$ with target $\ket{\psi}$
		\item\quad\textbf{apply} $H^{\otimes n}$ to $\ket{\phi}$
		\item\quad\textbf{measure and discard} ancillas $\ket\phi$ 
		\item\quad\textbf{if} outcomes are not all zero \textbf{then}
		\item\quad\quad\textbf{apply} $Z$ to $\ket{\psi}$
		\item\textbf{until} outcomes are all zero
	\end{algorithmic}
\end{algorithm}

\begin{thm}
	\label{thm:toffcount}
	Let $X$ be the random number of times Algorithm~\ref{alg:rotation} repeats until it applies an $R_{\theta^\ast}$ rotation to $\ket{\psi}$. 
	Then $X$ is geometrically distributed, $X \geq 1$, $P(X>m)<2^{-m}$ and $E(X) < 2$.
\end{thm}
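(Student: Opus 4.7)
The plan is to show that the loop in Algorithm~\ref{alg:rotation} executes independent Bernoulli trials with a fixed success probability $p>\tfrac12$, after which the three claims about $X$ follow immediately from the standard properties of a geometric distribution.

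First I would justify independence and the constant per-trial success probability. By Theorem~\ref{thm:distance}, on any iteration starting from the input state $\ket{\psi}$ the probability that the ancilla measurement yields all zeros is
\[
p=\tfrac12+\tfrac12\tan^2\tfrac{\theta^\ast}2,
\]
and this probability does not depend on $\ket{\psi}$. By Theorem~\ref{thm:zgate}, when the measurement does not yield all zeros the circuit has applied $Z$ (up to a global phase) to $\ket{\psi}$; step~11 of the algorithm then applies another $Z$, and since $Z^2=\mathbb{I}$ the working register is restored to $\ket{\psi}$ up to an irrelevant global phase. The next iteration therefore starts from the same state (fresh ancillas are prepared in step~3), so it constitutes an independent Bernoulli trial with success probability $p$.

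Next I would read off the three conclusions. Because the algorithm stops exactly at the first successful trial, $X$ is geometrically distributed with parameter $p$, so $X\ge 1$ always, $P(X>m)=(1-p)^m$, and $E(X)=1/p$. Since the loop is only entered when $\theta^\ast\ne 0$ (line~2), we have $\tan^2(\theta^\ast/2)>0$, hence $p>\tfrac12$, which gives $1-p<\tfrac12$ and $1/p<2$. Substituting yields $P(X>m)<2^{-m}$ and $E(X)<2$ as required.

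The only subtle point—and the place I would write most carefully—is confirming that the post-correction state really is $\ket{\psi}$ up to a global phase, so that successive iterations are genuinely i.i.d.\ rather than merely conditionally independent with drifting success probabilities. Theorem~\ref{thm:zgate} establishes precisely this via the equality $\langle\Theta_{n,x}|\Phi_{n,k}\rangle=-\langle\Theta_{n,x}|\Psi_{n,k}\rangle$, so no additional computation is needed; the remainder of the argument is just the elementary geometric-distribution bookkeeping above.
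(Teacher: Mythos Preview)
Your proof is correct and follows essentially the same approach as the paper's: identify the iterations as independent Bernoulli trials with success probability $p>\tfrac12$ and read off the geometric-distribution facts. Your version is actually more careful than the paper's, since you explicitly justify independence via the $Z$-correction restoring $\ket{\psi}$ and invoke the $\theta^\ast\neq0$ condition on line~2 to obtain the strict inequality $p>\tfrac12$.
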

\begin{proof}
	Each repetition of Algorithm~\ref{alg:rotation} is independent and at least one attempt is needed, so $X \geq 1$.
	By Theorem \ref{thm:zgate} the probability of applying ${R_{\theta^\ast}}$ to $\ket\psi$ is strictly greater than $1/2$.
	Then, $X$ is geometrically distributed, $P(X>m)<1-\sum_{j=1}^{m}2^{-j}=2^{-m}$
   and $\text{E}\left(X\right) < \tfrac{1}{{1}/{2}} = 2$.
\end{proof}
\begin{cor}
	Algorithm~\ref{alg:rotation} has expected Toffoli count strictly less than $4\lceil\log\tfrac{1}{\epsilon}\rceil$.
\end{cor}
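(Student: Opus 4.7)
The plan is to combine two facts already established in the paper: an upper bound on the Toffoli count of a single execution of the circuit in Figure~\ref{fig:qsubroutine}b, and the bound $E(X)<2$ on the expected number of repetitions from Theorem~\ref{thm:toffcount}.

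First, I would recall from the section on Explicit gate array and gate counts that the circuit realizing one attempt consists of two $\geq k$ comparators sharing the same internal ancillas, where the second comparator only needs to uncompute. After the simplifications described there, the total Toffoli count per repetition is bounded above by $2n-2$. Plugging in $n = 1 + \lceil\log\tfrac{1}{\epsilon}\rceil$ from Eqn.~\eqref{eq:n} gives a per-repetition upper bound of exactly $2\lceil\log\tfrac{1}{\epsilon}\rceil$ Toffolis. Note that this bound holds deterministically on every repetition, independent of the measurement outcome.

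Next, I would apply linearity of expectation. Let $T$ denote the total (random) Toffoli count of Algorithm~\ref{alg:rotation}. Since each of the $X$ repetitions uses at most $2\lceil\log\tfrac{1}{\epsilon}\rceil$ Toffolis, we have $T \le 2\lceil\log\tfrac{1}{\epsilon}\rceil \cdot X$, and hence $E(T) \le 2\lceil\log\tfrac{1}{\epsilon}\rceil \cdot E(X)$. By Theorem~\ref{thm:toffcount}, $E(X) < 2$, so $E(T) < 4\lceil\log\tfrac{1}{\epsilon}\rceil$, which is precisely the claimed bound.

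There is no real obstacle here — the only subtlety is to verify that the inequality is strict. This is immediate because the strict inequality $E(X) < 2$ from Theorem~\ref{thm:toffcount} carries through the multiplication by the positive constant $2\lceil\log\tfrac{1}{\epsilon}\rceil$ (assuming $\epsilon<1$ so that this constant is positive; the edge case $\epsilon \ge 1$ is trivial since the target accuracy is then already met by the identity). One should also briefly mention the degenerate case $\theta^\ast = 0$ handled before the loop, in which case $X = 0$ Toffolis are used and the bound holds vacuously.
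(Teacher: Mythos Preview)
Your proposal is correct and follows essentially the same approach as the paper: combine the per-repetition Toffoli bound $2\lceil\log\tfrac1\epsilon\rceil$ from the gate-count section with $E(X)<2$ from Theorem~\ref{thm:toffcount}. The paper's proof is a terse two-line version of exactly this argument; your additional remarks on strictness and the degenerate cases are fine but not required.
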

\begin{proof}
  The gate array has at most Toffoli count $2\lceil\log\tfrac{1}{\epsilon}\rceil$, so
	the corollary follows from Theorem \ref{thm:toffcount}.
\end{proof}
\begin{cor}
  Algorithm~\ref{alg:rotation} has expected gate depth strictly less than $4\lceil\log\tfrac{1}{\epsilon}\rceil+6$.
\end{cor}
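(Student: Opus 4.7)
The plan is to bound the depth of one execution of the subcircuit in Figure~\ref{fig:qsubroutine}b, argue that the conditional $Z$ correction can be scheduled in parallel with the Hadamards that open the next iteration, and then combine with $E(X) < 2$ from Theorem~\ref{thm:toffcount}.

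First I would tally the critical-path depth of a single pass. The pass opens with $H^{\otimes n}$ on the ancillas (depth $1$). The first $\geq k$ comparator, after the reductions of Section~III, is a ripple-carry chain of $n-2$ carry-computation Toffolis terminated by an output Toffoli onto $\ket\psi$ (depth $n-1$, since ripple carry is inherently sequential). Then $S$ acts on $\ket\psi$ (depth $1$). The second comparator, symmetrically, consists of an output Toffoli on $\ket\psi$ followed by $n-2$ carry-uncomputation Toffolis that unwind the chain left by the first comparator (depth $n-1$). Finally $H^{\otimes n}$ closes the pass (depth $1$). The $S$ gate must sit sequentially between the two output Toffolis since all three act on $\ket\psi$, so the iteration depth is
\[
D = 1 + (n-1) + 1 + (n-1) + 1 = 2n+1 = 2\lceil\log\tfrac1\epsilon\rceil + 3.
\]

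Next I would note that the optional $Z$ correction acts on $\ket\psi$, while the Hadamard layer that begins the following iteration acts on a freshly prepared ancillary register. Since these qubit sets are disjoint, the two layers share a time slot, so $Z$ contributes nothing to the critical path. The total gate depth of Algorithm~\ref{alg:rotation} is therefore $XD$, where $X$ is the geometrically distributed repetition count bounded in Theorem~\ref{thm:toffcount}.

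Combining, since $D$ is deterministic and $E(X) < 2$,
\[
E(XD) = D\,E(X) < 2D = 4\lceil\log\tfrac1\epsilon\rceil + 6,
\]
as required. The hard part is that the bound is essentially tight: the strict inequality is inherited entirely from $E(X)<2$, so one must be careful that the fused ripple-carry depth is exactly $2n-2$ with no hidden slack, and that the $Z$ correction genuinely overlaps the following Hadamard layer; even one extra layer in $D$ would break the bound.
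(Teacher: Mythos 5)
Your proof is correct and follows essentially the same route as the paper's: the per-iteration depth is the Toffoli count $2n-2$ plus three layers ($H^{\otimes n}$, $S$, $H^{\otimes n}$), giving $2\lceil\log\tfrac1\epsilon\rceil+3$, which combined with $E(X)<2$ from Theorem~\ref{thm:toffcount} yields the bound. You are merely more explicit than the paper's one-line argument, in particular in accounting for the sequential ripple-carry depth and in observing that the conditional $Z$ can be scheduled alongside the next iteration's Hadamard layer on the fresh ancillas.
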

\begin{proof}
  If the gate in Figure \ref{fig:comp}b is counted as a single modified Toffoli, the gate depth is the Toffoli count plus 3, so the corollary follows from Theorem \ref{thm:toffcount}.
\end{proof}

\section{Example: approximating the $T$ gate}
The $T$ (or $\tfrac{\pi}{8}$) gate is a frequently used universal gate that applies an $R_{\tfrac\pi4}$ rotation to a single qubit.
For illustrative purposes we perform a coarse approximation here with $||T- R_{\theta^\ast}||\le|\tfrac\pi4-\theta^\ast|\le 10^{-2}$. 
The number of ancillas should be chosen as 
\begin{equation}
n=1+\lceil\log10^2\rceil=8,
\end{equation}
so that
\begin{equation}
	\begin{split}
		k&=2^7 + \bigl\lfloor 2^7 \tan \tfrac{\pi}{8} + \tfrac{1}{2} \bigr\rfloor
		= 128 + 53
		=10110101_2.
	\end{split}
\end{equation}
The probability of success per repetition is
\begin{equation}
	\begin{split}
		P\left(0^n\right)
		= \frac{1 + \left(\tfrac{53}{128}\right)^2}{2}
    		=\frac{19193}{32768}
		\approx
		0.5857.
	\end{split}
\end{equation}
We select $n$ to give at least the desired accuracy, but here
\begin{equation}
  2^7\tan\tfrac\pi8=11 0101.000001001\ldots_2,
\end{equation}
so the rounding error is much smaller than $\tfrac12$.
The actual approximation error is
\begin{equation}
||T-R_{\theta^\ast}||\le\bigl|\tfrac\pi4-\theta^\ast\bigr|=\tfrac\pi4-2\arctan\tfrac{53}{128}\approx2.579\times10^{-4}.
\end{equation}
The small rounding error has the effect that selecting $n$ in the range $8<n\le12$ would give zeros for the lowest significant bits of $k$. 
\begin{figure}[tbp]
	\includegraphics[width=\linewidth]{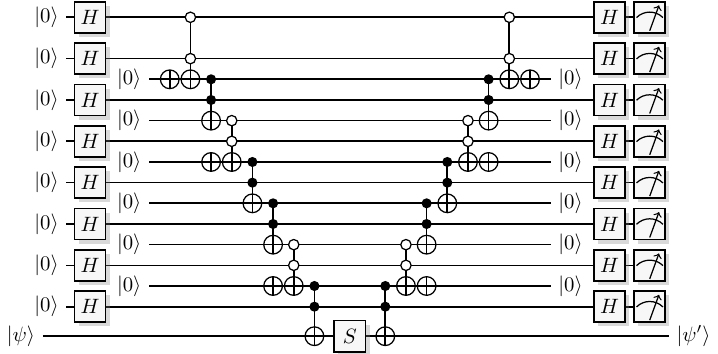}
	\caption{Gate array that approximates the $T$ gate. If ancilla measurement outcomes are all zero, the circuit applied an $R_{\theta^\ast}$ rotation to $\ket\psi$ where $||T - R_{\theta^\ast}||\le|\tfrac\pi4-\theta^\ast|<2.6\times10^{-4}$. This happens with probability $> 0.58$; for other measurement outcomes the circuit applied a $Z$ gate on $\ket\psi$. The circuit contains a comparator with the integer $k=10110101_2$, starting with the least significant bit comparison at the top.}
	\label{fig:T}
\end{figure}
Then the simplification in Figure~\ref{fig:comp}d would apply one or more times, so that the circuit still would simplify to that in Figure~\ref{fig:T}.
For the $T$ gate, the choice $n=8$ gives such a good approximation that the next improvement does not occur until $n=13$.

In actual use the accuracy will need to be better. 
For example, factoring an $m$-bit integer using Shor's algorithm requires approximating close to $2m\log m$ single-qubit gates \cite{Kliuchnikov2016}, and with $m=10000$ this number is $2.7\times10^5$.
If the total error due to gate approximation is required to be less than 0.01\%, assuming the errors add, we need $\epsilon\approx 10^{-4}/(2.7\times10^5)\approx 3.8\times10^{-9}$ or~$n=33$.

\section{Conclusions}
We have proposed an algorithm for generating a phase rotation $R_{\theta^*}$ that is $\epsilon$-close to a desired rotation $R_\theta$, inspired by an earlier construction in Nielsen and Chuang \cite{Nielsen2010}.
The construction uses a $\geq k$ test that allows for simple selection of the rotation angle, and consists only of gates from the Clifford+Toffoli gate set.
Our algorithm that uses this circuit has expected Toffoli count strictly less than $4\lceil\log\tfrac1\epsilon\rceil$, expected gate depth strictly less than $4\lceil\log\tfrac1\epsilon\rceil+6$, and uses $2\lceil\log\tfrac1\epsilon\rceil$ ancillas,

This is a cubic improvement over the Solovay-Kitaev algorithm \citep{Kitaev1997, Dawson2005} and an improvement both over existing generic algorithms and those specifically involving the Clifford+Toffoli gate set \citep{Kuperberg2023, Amy2023, Mukhopadhyay2024}.
A lower bound can be obtained by a simple counting argument \cite{Harrow2002}, but that bound depends on the gate set size which here involves the number of ancillas, so the bound increases slightly slower than logarithmic in $\tfrac1\epsilon$, allowing a small margin for improvement. 
However, the present algorithm uses only gates with bounded fan-in \citep{Bravyi2018}, so there may not be such a margin: We conjecture this is the case.
Proving this would be an interesting development of our results.

Our construction only relies on being able to calculate $\tan\tfrac\theta2$, so requires only polynomial classical runtime and space to generate.
In addition, the relative simplicity of our algorithm makes it attractive as a standard tool in quantum engineering.
It is our hope that this, in combination with its efficient implementation, will make it useful for both theoretical and practical pursuits.

\bibliography{lib}

\end{document}